\documentclass[11pt]{amsart}
\usepackage{longtable}
\usepackage{array}
\usepackage{multicol}
\usepackage{graphicx}
\usepackage{amscd}
\usepackage{tikz}
\usepackage{bm}
\usetikzlibrary{arrows.meta,positioning,calc,shapes.geometric}
\usepackage{xcolor}
\usepackage[colorlinks=true,citecolor=black,linkcolor=black,urlcolor=blue]{hyperref}

\makeatletter
 \def\@textbottom{\vskip \z@ \@plus 1pt}
 \let\@texttop\relax
\makeatother
\usepackage{amsmath, amssymb, amsbsy, amsfonts, amsthm, latexsym, amsopn, amstext, amsxtra, euscript, amscd, color, mathrsfs,mathtools}
\usepackage[normalem]{ulem}
\usepackage{soul}

\usepackage{cite}

\makeatletter

\@namedef{subjclassname@2020}{%
  \textup{2020} Mathematics Subject Classification}
\makeatother

\PassOptionsToPackage{hyphens}{url}\usepackage{hyperref}
 
 \usepackage[capbesideposition=outside,capbesidesep=quad]{floatrow}

\restylefloat{table}
\restylefloat{table}
         
\usepackage{multirow,caption}
            
\usepackage{amscd}
\usepackage{color,enumerate}

\newcommand{\RNum}[1]{\lowercase\expandafter{\romannumeral #1\relax}}

\usepackage[colorinlistoftodos,prependcaption,textsize=tiny]{todonotes}

\theoremstyle{plain}
\newtheorem{thm}{Theorem}[section]
\newtheorem{lem}[thm]{Lemma}

\newtheorem{exmp}[thm]{Example}

\newtheorem{rmk}[thm]{Remark}

\newtheorem{thm-con}[thm]{Theorem-Conjecture}
\numberwithin{equation}{section}

\theoremstyle{definition}
\newtheorem{defn}[thm]{Definition}

\def\GL{\operatorname{GL}}

\newcommand{\F}{\mathbb F}

\begin{document}

\title[Constructions of complete permutations over $\F_q^n$]{Constructions of complete permutations over $\F_q^n$}
\author[S. U. Hasan]{Sartaj Ul Hasan}
 \address{Department of Mathematics, Indian Institute of Technology Jammu, Jammu 181221, India}
  \email{sartaj.hasan@iitjammu.ac.in}
  
  \author[R. Kaur]{Ramandeep Kaur}
  \address{Department of Mathematics, Indian Institute of Technology Jammu, Jammu 181221, India}
  \email{2022rma0027@iitjammu.ac.in}
  \author[H. Kumar]{Hridesh Kumar}
\address{Department of Mathematics, Indian Institute of Technology Jammu, Jammu 181221, India}
\email{2021rma2022@iitjammu.ac.in}
 \author[D. Panario]{Daniel Panario}
\address{School of Mathematics and Statistics, Carleton University, Ottawa ON K1S 5B6, Canada}
\email{daniel@math.carleton.ca}
\author[Q. Wang]{Qiang Wang}
\address{School of Mathematics and Statistics, Carleton University, Ottawa ON K1S 5B6, Canada}
\email{wang@math.carleton.ca}

\begin{abstract}
Complete permutation polynomials play an important role in cryptography, particularly in the design of cryptographic primitives such as the Lai--Massey scheme and S-boxes. We generalize a result of Sun, Li, Guo, and Qu (2021) by characterizing the complete permutation behavior of the mapping $\Psi(X)=M(X+\psi(AX))$ over $\F_q^n$,  where $\F_q$ is a finite field of $q$ elements with $q$ being a prime power, $M\in \GL(n, \F_q)$, $\GL(n, \F_q)$ is the general linear group  of order $n$ over $\F_q$, $A_{m \times n}$ is a full-rank matrix over $\F_q$, and $\psi=(\psi_1,\psi_2,\ldots,\psi_n)$ with each component function $\psi_i:\mathbb{F}_q^m\to\mathbb{F}_q$.
Furthermore, we establish criteria for the permutation and complete permutation properties of the mapping $F(X)=T(X+B^tf(AX))$ over $\mathbb{F}_{q}^n$, $f: \F_{q}^{m} \rightarrow \F_{q}^{n-m}$, $T \in \GL(n, \F_q)$, $A_{m \times n}$ and $B_{(n-m)\times n}$ are full-rank matrices over $\F_q$, $B^t$ represents the transpose of the matrix $B$, and $0<m<n$ are integers. These results also generalize an earlier result of Gravel and Panario (2023), who
showed that any arbitrary function $f$ from $\mathbb{F}_q^m$ to $\mathbb{F}_q^{\,n-m}$ can be extended to a bijection over $\mathbb{F}_{q}^n$ through the mapping 
$F(X)=T(X+B^tf(AX))$, under the condition $AB^t=0$. Here we do not impose the restriction that $AB^t=0$.

\end{abstract}
\maketitle

\section{Introduction}\label{S1}
Let $\F_q$ be a finite field with $q$ elements, where $q$ is a prime power, and $\F_q[x]$ represent the ring of polynomials in variable $x$ over $\F_q.$  A polynomial $f\in \F_q[x]$ is a permutation polynomial if the induced map $c \mapsto f(c)$ is a bijection from $\F_q$ to itself. It may be noted that permutation polynomials were first studied by Hermite~\cite{CC} over the finite fields of prime order. Later, Dickson~\cite{LE} investigated permutation polynomials over arbitrary finite fields. A polynomial is a complete permutation polynomial of $\F_q$ if both $f(x)$ and $f(x)+x$ are permutations over $\F_q.$ The notion of complete permutations was introduced by Mann \cite{M} in order to construct orthogonal Latin squares over arbitrary group $G.$  Niedrreiter and Robinson \cite{NR} studied this work over finite fields. After this, a lot of work has been done to study the complete permutation polynomials;  see \cite{BGQZ, BGZ, BW, BW2, BZ, FLWW, L, LLLZ, LWXZ, MP, SV,  TZMZ,  TW, WL, VS, WLHZ, WLHZ2, XC, XLZH} and references therein.

Complete permutation polynomials play a crucial role in several cryptographic schemes. For instance, the absence of the complete permutation in the Lai–Massey scheme renders it highly susceptible to differential cryptanalysis as a probability 1 differential exists for any number of rounds \cite{V}. Therefore, complete permutation polynomials are useful to avoid the propagation of non zero fixed points. Moreover, these polynomials are used in the design of the mode CLOC  \cite{IMGM} to prevent collisions arising from the addition of input and output. In many other mode of operations, one round uses $k$ as a subkey, while another round uses $f(k)$ as the subkey, and the completeness of $f$ ensures the uniform distribution of $f(k)+k$ \cite{CLLSL, GW}.

 In 2023, Gravel and Panario \cite{GP}, introduced a general construction that extends an arbitrary function $f: \F_{q}^{m} \rightarrow \F_{q}^{n-m}$ to a permutation function over $\F_q^n$. Specifically, they showed that mappings of the form
 \begin{equation}\label{e1}
      F(X)=T(X+B^tf(AX))
 \end{equation}
define permutations of $\F_q^n$ under suitable conditions, where $f$ is a function from $\F_q^m$ to $\F_q^{n-m}$, $T \in \GL(n, \F_q)$,  $\GL(n, \F_q)$ is the general linear group of order $n$ over $\F_q$, $A_{m \times n}$ and $B_{(n-m)\times n}$ are full rank matrices over $\F_q$, and $0<m<n$, are integers; see also \cite{GP2024,GPT2027} for some cryptographical properties of this construction. In this construction, they considered that $AB^t=0.$ Then, they studied the equations based on the finite differences and linear forms of $F(X).$ This naturally leads to the question of whether an arbitrary function $f$ admits an extension to a bijective function $F(X)$, given in Equation \eqref{e1}, once the condition $AB^{t}=0$ is no longer imposed.
Here, we extend this framework by relaxing the condition $AB^t=0$. This significantly complicates the structure of the resulting function, requiring new techniques to analyze permutation behavior. Using the AGW criterion \cite{AGW}, we establish conditions under which the mapping given in Equation \eqref{e1} remains a permutation of $\F_q^n$. In this paper, we investigate not only the permutation behavior of $F(X)$ in Equation \eqref{e1}, but also its complete permutation behavior. More precisely, we establish the conditions for both $F(X)$ and $F(X)+X$ to be permutations of $\F_q^n$. 

On the other hand, Sun, Li, Guo and Qu \cite{SLGQ} recently studied the complete permutations over $\F_q^n$ of the form 
\begin{equation}\label{e2}
    \Psi(X)=M(X+\psi(a^tX)),
\end{equation}
where $M \in \GL(n, \F_q)$, $X, a \in \F_q^n$, $\psi=(\psi_1, \psi_2, \ldots, \psi_n)$ such that the $\psi_i$'s are polynomials over $\F_q.$ Specifically, the authors provide a necessary and sufficient condition for $\Psi(X)$ to be a complete permutation polynomial over $\F_q^n$.  In this paper, we generalize the above result and investigate their applications in cryptography. In particular, we characterize the complete permutation behaviour of the function
\begin{equation}\label{e3}
    \Psi(X)=M(X+\psi(AX)),
\end{equation}
where $A_{m \times n}$ is a full rank matrix, $\psi=(\psi_1, \psi_2, \ldots, \psi_n)$ such that the $\psi_i: \F_q^m  \rightarrow \F_q$ is an arbitrary function. We explore the cryptographic relevance of these permutation structures by studying the application of complete permutation polynomials within the Lai--Massey scheme \cite{LM}, a symmetric-key cryptographic construction underlying block ciphers such as IDEA and FOX. This connection provides a framework for applying our permutation results to the analysis and design of cryptographic transformations.

The structure of this paper is organized as follows. Section \ref{S2} introduces some notations and preliminary results, which are essential for the main results of this paper. Our extensions relaxing the orthogonality condition for the linearly extended discrete functions and the study of their complete permutation condition are given in Section \ref{S3}. In Section \ref{S4}, we extend and characterize complete permutations in the setting of \cite{SLGQ}, and explore cryptographical applications of these complete permutation polynomials. The paper is concluded in Section \ref{s5}.

\section{Preliminaries}\label{S2}
Throughout this paper, we use the following notations:
\begin{itemize}
   \item we write $X =(x_1, x_2, \ldots, x_n) \in \F_q^n$ to denote a column vector;
   \item $A=[a_{ij}]$ is an $m \times n$ full rank matrix, where $a_{ij} \in \F_q$;
   \item $B=[b_{ij}]$ is an $(n-m)\times n$ full rank matrix, where $b_{ij} \in \F_q$;
   \item $\mathbf{I_r}$ is the identity matrix of order $r$;
   \item $\mathbf{0_{r \times s}}$ is the zero matrix of order $r \times s$;
   \item $A^t$ represents the transpose of the matrix $A$.
\end{itemize}
\begin{defn}\cite{LNH_1997}
    A system of polynomials $f_1, f_2, \ldots f_r \in \F_{q}[x_1, x_2, \ldots, x_s]$ is \emph{orthogonal} if the system of equations
\begin{equation}
\left\{
\begin{aligned}
f_1(x_1,x_2,\ldots,x_s) &= a_1,\\
f_2(x_1,x_2,\ldots,x_s) &= a_2,\\
 &\vdots \\
f_r(x_1,x_2,\ldots,x_s) &= a_r
\end{aligned}
\right.
\end{equation}
has  $q^{s-r}$ solutions in $\F_q^{s}$ for each tuple $(a_1,a_2,\ldots,a_r) \in \F_q^{r}$.
\end{defn}

\begin{lem}\label{AGW}\cite[the AGW Criterion]{AGW}
    Let $A$, $S$, and $\bar{S}$ be finite sets with $|S| = |\bar{S}|$, and $f:A\to A,
h:S\to \bar{S},
\lambda:A\to S,
\bar{\lambda}:A\to \bar{S}$
be maps satisfying
\(
\bar{\lambda}\circ f = h\circ \lambda
\)
$$\begin{CD}
A @>f>> A\\
  @V{\lambda} VV   @VV \overline{\lambda}. V\\
 S  @>>h>  \overline{S}\
\end{CD}$$
If both $\lambda$ and $\bar{\lambda}$ are surjective, then the following statements are equivalent:

\begin{enumerate}
    \item[(i)] $f$ is bijective (that is, a permutation of $A$);
    
    \item[(ii)] $h$ is bijective from $S$ to $\bar{S}$, and $f$ is injective on $\lambda^{-1}(s)$ for each
    $s\in S$.
\end{enumerate}

\end{lem}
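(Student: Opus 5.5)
We prove the AGW criterion (Lemma \ref{AGW}) directly from the commutativity relation $\bar{\lambda}\circ f = h\circ \lambda$, surjectivity of $\lambda$ and $\bar\lambda$, and finiteness. The key observation is that $f$ maps each fibre $\lambda^{-1}(s)$ into the fibre $\bar{\lambda}^{-1}(h(s))$. Indeed, if $\lambda(a)=s$, then $\bar\lambda(f(a))=h(\lambda(a))=h(s)$. Hence $A$ is partitioned into the fibres $\lambda^{-1}(s)$, $s\in S$, all nonempty by surjectivity of $\lambda$. Likewise $A$ is partitioned into the fibres $\bar\lambda^{-1}(\bar s)$, $\bar s\in\bar S$, all nonempty by surjectivity of $\bar\lambda$.

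\emph{(i) $\Rightarrow$ (ii).} Suppose $f$ is bijective. Injectivity of $f$ on each $\lambda^{-1}(s)$ is immediate. For $h$, take $\bar s\in\bar S$ and pick $b\in\bar\lambda^{-1}(\bar s)$. Since $f$ is surjective, $b=f(a)$ for some $a\in A$, so $\bar s=\bar\lambda(f(a))=h(\lambda(a))$. Thus $h$ is surjective, and since $|S|=|\bar S|$ are finite, $h$ is bijective.

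\emph{(ii) $\Rightarrow$ (i).} Suppose $h$ is bijective and $f$ is injective on every $\lambda^{-1}(s)$. Let $f(a)=f(a')$. Then $h(\lambda(a))=\bar\lambda(f(a))=\bar\lambda(f(a'))=h(\lambda(a'))$. Since $h$ is injective, $\lambda(a)=\lambda(a')$, so $a$ and $a'$ lie in the same fibre of $\lambda$. Injectivity of $f$ on that fibre then gives $a=a'$. Hence $f$ is injective on all of $A$, and because $A$ is finite, $f$ is bijective.

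There is no real obstacle. The one point needing care is to use finiteness in the right place: to upgrade surjectivity of $h$ to bijectivity via $|S|=|\bar S|$, and injectivity of $f$ to bijectivity via $|A|<\infty$. The argument of (ii) $\Rightarrow$ (i) uses only injectivity of $h$, and surjectivity of $\bar\lambda$ is used only in (i) $\Rightarrow$ (ii) to guarantee that every $\bar s$ has a preimage point $b$. Surjectivity of $\lambda$ is not needed for either direction; it only ensures the $\lambda$-fibres are nonempty, so that (ii) is not vacuous on any $s\in S$.
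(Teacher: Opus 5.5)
Your proof is correct. Surjectivity of $\bar\lambda\circ f=h\circ\lambda$ forces $h$ to be onto, and injectivity of $h$ confines any collision of $f$ to a single $\lambda$-fibre; your remarks on which hypotheses each direction actually uses are also accurate. The paper gives no proof of this lemma and simply cites it from \cite{AGW}, where the argument is essentially this same elementary diagram chase.
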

In our work, we need the following results from \cite{GP} and \cite{SLGQ}, respectively.

\begin{lem}\label{l21}\cite[Lemma 1]{GP}
    Let $\F_q$ be a finite field. Then $F(X)=T(X+B^tf(AX))$ is bijective over $\F_q^n.$
\end{lem}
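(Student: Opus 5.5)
The statement as quoted omits a hypothesis. Without it the claim is false: take $n=2$, $m=1$, $A=B=(1\ 0)$, $T=\mathbf{I_2}$ and $f(x)=-x$; then $F(x_1,x_2)=(0,x_2)$ is not injective. I will therefore prove it in the setting of \cite{GP} described in the introduction. There $T\in\GL(n,\F_q)$, $A$ and $B$ are full rank, $f:\F_q^m\to\F_q^{n-m}$ is arbitrary, and the orthogonality condition $AB^t=\mathbf{0_{m\times(n-m)}}$ holds. This condition is what the proof uses.

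\emph{Reduction.} Since $T$ is invertible, $F$ is bijective if and only if $G(X)=X+B^tf(AX)$ is bijective on $\F_q^n$. So it suffices to treat $G$.

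\emph{Key identity.} Multiply $G$ on the left by $A$:
\[
AG(X)=AX+AB^tf(AX)=AX .
\]
Thus $G$ preserves the value of the linear form $AX$. Because $G$ fixes $AX$, the correction term $B^tf(AX)$ can be recovered from the image $G(X)$. This makes it possible to undo $G$ explicitly.

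\emph{Explicit inverse.} Define $H(Y)=Y-B^tf(AY)$. The same computation gives $AH(Y)=AY$. Hence
\[
G(H(Y))=Y-B^tf(AY)+B^tf\bigl(AH(Y)\bigr)=Y-B^tf(AY)+B^tf(AY)=Y .
\]
So $G\circ H$ is the identity. In particular $G$ is surjective on the finite set $\F_q^n$, and therefore bijective. Symmetrically $H\circ G$ is the identity, so $G^{-1}=H$ and $F^{-1}(Y)=T^{-1}Y-B^tf(AT^{-1}Y)$.

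\emph{Alternative via the AGW criterion.} One could instead apply Lemma~\ref{AGW} with $\lambda=\bar\lambda:X\mapsto AX$ and $h=\mathrm{id}$ on $\F_q^m$. Here $\lambda$ is surjective because $A$ has full rank, and the identity $AG=A$ gives the commuting square. On each fibre $\{X:AX=s\}$ the map $G$ is the translation $X\mapsto X+B^tf(s)$, which is injective, so $G$ is a bijection. This fibre-wise argument is the template that later breaks down once $AB^t\neq0$.

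\emph{Main obstacle.} The only step needing care is the identity $AG(X)=AX$, which rests entirely on $AB^t=0$. Everything else is formal.
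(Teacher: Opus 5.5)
Your proof is correct. You are right that the quoted statement silently drops the hypothesis $AB^t=0$, which the introduction and abstract attribute to \cite{GP}. Your counterexample is valid and matches Theorem~\ref{T31}: there $g(x)=x+AB^tf(x)=0$, which does not permute $\F_q$.

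The paper gives no proof of this lemma; it is imported from \cite{GP}. Within the paper it is recovered as the special case $AB^t=0$ of Theorem~\ref{T31}. That proof is the AGW argument with $\lambda(X)=AX$, $\bar\lambda(X)=AT^{-1}X$ and $g=\mathrm{id}$, which is your alternative; you factor out $T$ first, so your $\bar\lambda$ is $A$ rather than $AT^{-1}$.

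Your main route, the explicit inverse $H(Y)=Y-B^tf(AY)$, is more elementary and slightly more general. It needs neither finiteness nor full rank of $A$ or $B$, since both $G\circ H$ and $H\circ G$ are the identity; only $AB^t=0$ and $T\in\GL(n,\F_q)$ are used. It also gives the closed form $F^{-1}(Y)=T^{-1}Y-B^tf(AT^{-1}Y)$.

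The fibre formulation is what the paper needs once $AB^t\neq 0$. The identity becomes $AG(X)=g(AX)$ with $g(Y)=Y+AB^tf(Y)$. The same fibre argument then turns bijectivity of $F$ into an equivalence with bijectivity of $g$ on $\F_q^m$, necessity included. Your inverse formula also survives, as $G^{-1}(Y)=Y-B^tf\bigl(g^{-1}(AY)\bigr)$, but only after $g$ is known to be invertible.
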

\begin{lem}\label{l22}\cite[Theorem 1]{SLGQ}
Let $a=(a_1,a_2,\ldots,a_n)\in \F_q^n$, $\psi=(\psi_1, \psi_2, \ldots, \psi_n)$ with
polynomials $\psi_1, \psi_2, \ldots, \psi_n$ over $\F_q$, $M\in \F_q^{n\times n}$  a
non-singular matrix, and $\mathbf{I}_n$  the identity matrix of order $n$.
Moreover, let us define $\Psi(X): \F_q^n\to\F_q^n$ as 
\begin{equation*}
\Psi(X)
=
M\Bigl(X+\psi(a^{\,t}X)\Bigr),
\end{equation*}
and $G_1:\F_q\to\F_q$ as
\[
G_1(x)
=x+a^{\,t}f(x)
=x+a_1\psi_1(x)+\cdots+a_n\psi_n(x).
\]
Then $\Psi$ is a complete permutation over
$\F_q^n$ if and only if $G_1$ permutes $\F_q$ and one of the following holds:

\begin{enumerate}

\item
$\operatorname{rank}({M-\mathbf{I}_n})=n-1$,
$a\notin \mathcal{C}(M^t-\mathbf{I}_n)$ and the map
$G_2:\F_q\to\F_q$ defined as
\[
 G_2(x)
=
c^{\,t}M\psi(x)
=
c^{\,t}\psi(x)
\]
permutes $\F_q$, where
$\mathbf{0}_{n\times 1}\neq c\in\F_q^n$ satisfies
\(
(M^t-\mathbf{I}_n)c=\mathbf{0}_{n \times 1}
\)
and $\mathcal{C}(M^t-\mathbf{I}_n)$ is the column space of $M^t-\mathbf{I}_n$;
\item
$\operatorname{rank}(M-\mathbf{I}_n)=n$, and the map
$G_3:\F_q\to\F_q$ defined as
\[
G_3(x)
=
x+c^{\,t}M\psi(x)
=
x+(a+c)^{t}\psi(x)
\]
permutes $\F_q$, where
$c=(M^t-\mathbf{I}_n)^{-1}a$.
\end{enumerate}
\end{lem}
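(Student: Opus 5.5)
The plan is to split completeness into its two halves: $\Psi(X)$ permutes $\F_q^n$, and $\Psi(X)-X$ permutes $\F_q^n$. Each half is handled by the AGW criterion (Lemma~\ref{AGW}), using the linear projection $\lambda(X)=a^tX$ onto $\F_q$. I read the statement as using $\Psi(X)-X$ for the second half; this is the reading under which $M-\mathbf{I}_n$ appears, and in characteristic $2$ it coincides with $\Psi(X)+X$. Also, $G_1$ should read $x+a^t\psi(x)$ (the $f$ in the statement is a typo). I assume $a\neq 0$. If $a=0$, then $\Psi$ is an affine map, $G_1(x)=x$, and the cases reduce to elementary linear algebra.

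\textbf{First half.} Since $M$ is invertible, $\Psi$ permutes if and only if $X\mapsto X+\psi(a^tX)$ does. Take $\lambda=\bar\lambda=a^t(\cdot)$. Then
\[
a^t\bigl(X+\psi(a^tX)\bigr)=G_1(a^tX),
\]
so the AGW diagram commutes with $h=G_1$. On a fibre $a^tX=s$, the map is the translation $X\mapsto X+\psi(s)$, which is injective. Hence $\Psi$ permutes if and only if $G_1$ permutes $\F_q$.

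\textbf{Second half, $\operatorname{rank}(M-\mathbf{I}_n)=n$.} Write
\[
\Psi(X)-X=(M-\mathbf{I}_n)\bigl(X+(M-\mathbf{I}_n)^{-1}M\psi(a^tX)\bigr).
\]
Apply the same AGW argument with $\lambda=\bar\lambda=a^t(\cdot)$. This gives $h(x)=x+a^t(M-\mathbf{I}_n)^{-1}M\psi(x)=x+c^tM\psi(x)$, where $c=(M^t-\mathbf{I}_n)^{-1}a$. Since $c^t(M-\mathbf{I}_n)=a^t$, we have $c^tM=a^t+c^t$, which gives the second expression for $G_3$.

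\textbf{Second half, $\operatorname{rank}(M-\mathbf{I}_n)<n$.} First I give a direct necessity argument. On the fibre $a^tX=s$, the map $\Psi(X)-X=(M-\mathbf{I}_n)X+M\psi(s)$ is affine. It is injective on this affine hyperplane if and only if $\ker(M-\mathbf{I}_n)\cap\ker(a^t)=\{0\}$.

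If the rank is at most $n-2$, the kernel has dimension at least $2$ and must meet the hyperplane $\ker(a^t)$ nontrivially. So $\Psi(X)-X$ is not injective and cannot permute.

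If the rank is $n-1$, let $v$ span $\ker(M-\mathbf{I}_n)$. Injectivity on fibres means $a^tv\neq0$. Since $\mathcal{C}(M^t-\mathbf{I}_n)$ is the orthogonal complement of $\ker(M-\mathbf{I}_n)$, this is exactly $a\notin\mathcal{C}(M^t-\mathbf{I}_n)$.

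Now assume rank $n-1$ and $a\notin\mathcal{C}(M^t-\mathbf{I}_n)$, and choose $0\neq c$ with $(M^t-\mathbf{I}_n)c=0$, so $c^tM=c^t$. Take $\lambda=a^t(\cdot)$ and $\bar\lambda=c^t(\cdot)$; both are surjective since $a,c\neq0$. Then
\[
c^t\bigl(\Psi(X)-X\bigr)=c^t(M-\mathbf{I}_n)X+c^tM\psi(a^tX)=c^t\psi(a^tX),
\]
so the diagram commutes with $h=G_2$. Fibre-injectivity was established above, so AGW shows that $\Psi(X)-X$ permutes if and only if $G_2$ permutes.

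Combining the two halves yields the stated equivalence. The main point needing care is necessity in the singular case. There I rely on the direct fibre argument rather than on AGW, because AGW needs a surjective $\bar\lambda$ compatible with the map, and the fibre argument covers rank at most $n-2$ uniformly.
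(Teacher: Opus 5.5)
Your proof is correct. This includes your reading of the second half as $\Psi(X)-X$, which is the reading consistent with the $M-\mathbf{I}_n$ formulas, and of $f$ as $\psi$ in $G_1$. The paper does not prove this lemma itself (it is quoted from \cite{SLGQ}), but your argument is exactly the AGW argument of its Theorem~\ref{T34} (and hence Theorem~\ref{T35}) specialized to $m=1$: the same projections, the same kernel vector $c$ in the singular case, and your condition $\ker(M-\mathbf{I}_n)\cap\ker(a^t)=\{0\}$ is the paper's stacked-rank condition, with its case (3) vacuous when $m=1$.
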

\section{Extended discrete permutations and complete permutations}\label{S3}

In this section, we extend the result established in \cite{GP} by relaxing the condition $AB^t=0$. Furthermore, we establish criteria under which the mapping
\[
F(X)=T\bigl(X+B^t f(AX)\bigr)
\]
over $\mathbb{F}_q^n$ is a complete permutation.
 
\begin{thm}\label{T31}Let $F(X)=T(X+B^tf(AX))$, where $f:\F_q^m \rightarrow \F_q^{n-m}$, $T \in \GL(n, \F_q)$, $A_{m \times n}$, $B_{(n-m)\times n}$ are full rank matrices over $\F_q$, and $0<m<n$, are integers. Then $F(X)$ is bijective if and only if $g(X)=X+AB^t(f(X))$ permutes $\F_{q}^m.$
\end{thm}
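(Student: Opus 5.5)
Since $T\in\GL(n,\F_q)$, $F$ is bijective exactly when $H(X)=X+B^tf(AX)$ is bijective on $\F_q^n$, so we may take $T=\mathbf{I_n}$. The plan is to apply the AGW criterion (Lemma \ref{AGW}) with $A$ in the lemma replaced by $\F_q^n$, and with $S=\bar S=\F_q^m$ and $\lambda=\bar\lambda$ the map $X\mapsto AX$. This map is surjective because the matrix $A$ has full rank $m$.

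First we check the commuting diagram. For $X\in\F_q^n$ we have
\[
A\,H(X)=AX+AB^t f(AX)=g(AX),
\]
so $\bar\lambda\circ H=g\circ\lambda$ with $h=g(X)=X+AB^tf(X)$. By Lemma \ref{AGW}, $H$ is bijective if and only if $g$ permutes $\F_q^m$ and $H$ is injective on every fibre $\lambda^{-1}(s)=\{X: AX=s\}$.

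The injectivity on fibres is automatic. If $AX=AY=s$, then
\[
H(X)-H(Y)=X-Y+B^t\bigl(f(s)-f(s)\bigr)=X-Y,
\]
so $H(X)=H(Y)$ forces $X=Y$. Hence the fibre condition always holds, and the equivalence reduces exactly to $g$ being a permutation of $\F_q^m$. This proves both directions at once.

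There is no real obstacle here. The only point needing care is that the AGW equivalence gives both implications, which requires the surjectivity of $\lambda$; that comes from $\operatorname{rank}A=m$. If we wanted to avoid AGW, we could argue directly. For necessity, if $g(s_1)=g(s_2)$ with $s_1\ne s_2$, we would pick $X_i$ with $AX_i=s_i$ and adjust $X_2$ by an element of $\ker A$ to produce a collision of $H$. That adjustment is more fiddly, so the AGW route is the cleaner one.
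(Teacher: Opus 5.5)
Your proof is correct and follows essentially the same route as the paper: the AGW criterion with $\lambda(X)=AX$ on $\F_q^n$ and $g(X)=X+AB^tf(X)$ on $\F_q^m$. The only differences are cosmetic. You first strip off $T$, whereas the paper keeps $T$ and takes $\bar\lambda(X)=AT^{-1}X$. You also explicitly check injectivity on the fibres, which the paper simply asserts is clear.
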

\begin{proof}
Since $A$ has full rank $m$, the linear map $A : \F_q^n \to \F_q^m$ is surjective. Consider the following diagram:
$$\begin{CD}
\F_q^n @>F(X)=T(X+B^tf(AX))>> \F_q^n\\
  @V{AX} VV   @VV AT^{-1}X. V\\
 \F_q^m  @>>g(X)=X+AB^tf(X)>  \F_q^m
\end{CD}$$
It is straightforward to verify that the diagram commutes, i.e.,
\[
AT^{-1} \circ F = g \circ A.
\]
Therefore, by the AGW criterion, the mapping $F$ is a permutation of $\F_q^n$ if and only if $F$ is injective on each fiber of $AX$ and $g(X)=X+AB^tf(X)$ is a permutation of $\F_q^m$. It is clear that $F$ is injective on each fiber of $AX$. This completes the proof.
\end{proof}

\begin{thm}\label{T32} Let $F(X)=T(X+B^tf(AX))$ and $f(X+a)-f(X) \in \ker(AB^t)$ for every $0\neq a \in \F_q^m$, where $\ker(AB^t)=\{ Y  \in \F_q^{n-m} \mid AB^tY=0 \}$. Then $F$ is bijective.
\end{thm}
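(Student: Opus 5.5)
By Theorem~\ref{T31}, $F$ is bijective if and only if
\[
g(X)=X+AB^tf(X)
\]
permutes $\F_q^m$. So the plan is to reduce the statement to showing that $g$ is a permutation of $\F_q^m$. Since $\F_q^m$ is finite, it is enough to show that $g$ is injective.

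For injectivity, I take $X,Y\in\F_q^m$ with $g(X)=g(Y)$ and suppose, for contradiction, that $X\neq Y$. Put $a=X-Y\neq 0$, so $X=Y+a$. The hypothesis applied at the point $Y$ gives $f(Y+a)-f(Y)\in\ker(AB^t)$, that is,
\[
AB^t f(X)=AB^t f(Y).
\]
Substituting this into $g(X)=g(Y)$, i.e.\ $X+AB^tf(X)=Y+AB^tf(Y)$, cancels the $AB^t f$ terms and leaves $X=Y$. This contradicts $X\neq Y$, so $g$ is injective and therefore bijective. Theorem~\ref{T31} then gives that $F$ is bijective.

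An equivalent way to phrase the same observation is that the hypothesis forces $AB^tf$ to be a constant map $c$ on $\F_q^m$: all differences $f(X+a)-f(X)$ lie in $\ker(AB^t)$, so every value $AB^tf(X)$ equals $AB^tf(0)$. Then $g(X)=X+c$ is a translation, which is clearly a permutation. I do not expect a genuine obstacle here. The only point to check is that the hypothesis is read as holding for every base point $X\in\F_q^m$ as well as every nonzero $a$, which is the natural reading of the statement.
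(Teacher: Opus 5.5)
Your proposal is correct and matches the paper's proof: you reduce via Theorem~\ref{T31} to injectivity of $g(X)=X+AB^tf(X)$, then apply the hypothesis with $a=X-Y\neq 0$, which cancels the $AB^tf$ terms and forces $X=Y$. Your extra observation is also correct: the hypothesis makes $AB^tf$ constant, so $g$ is a translation, which shows how strong the condition is.
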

\begin{proof} 
By Theorem~\ref{T31}, the mapping $F$ permutes $\F_q^n$ if and only if 
\[
g(X)=X+AB^tf(X)
\]
permutes $\F_q^m$. We want to show that $g$ is bijective. In contrast, suppose $Y_1, Y_2 \in \F_q^m$ such that 
\[
g(Y_1)=g(Y_2).
\]
Then
\[
Y_1 - Y_2 + AB^t\big(f(Y_1)-f(Y_2)\big)=0.
\]
Let $a = Y_1 - Y_2$. If $a \neq 0$, then by assumption we have 
\[
f(Y_1)-f(Y_2) = f(Y_2+a)-f(Y_2) \in \ker(AB^t),
\]
that is, $AB^t\big(f(Y_1)-f(Y_2)\big)=0$. Hence, the above equation reduces to $a=0$, a contradiction. Therefore, $Y_1=Y_2$, and $g$ is injective. Since $\F_q^m$ is finite, injectivity of $g$ implies bijectivity. Consequently, $g$ permutes $\F_q^m$, and hence $F$ permutes $\F_q^n$.
\end{proof}
In the following theorem, we consider 
\begin{equation}\label{restrictionAB}
AB^t=
\begin{bmatrix}
\mathbf{I}_k & \mathbf{0}_{k \times (n-m-k)} \\
\mathbf{0}_{(m-k) \times k} & \mathbf{0}_{(m-k) \times (n-m-k)}
\end{bmatrix},
\end{equation}
where $k <m,$ $k< n-m$, and extend some non necessarily bijective function $f:\F_q^m \rightarrow \F_q^{n-m}$ to a bijective function $F(X)=T(X+B^tf(AX))$ over $\F_q^n.$
\begin{thm}\label{T33}
     Let $F(X)=T(X+B^tf(AX))$ and $f:\F_q^m \rightarrow \F_q^{n-m}$ be defined as 
     \begin{equation*}
     \begin{split}
     f(x_1, x_2, \ldots, x_m)=&(g_1(x_1, x_2, \ldots, x_k)-x_1, g_2(x_1, x_2, \ldots, x_k)-x_2, \ldots, g_k(x_1, x_2, \ldots, x_k)-x_k, \\&h_1(x_1,x_2 \ldots, x_m), \ldots,  h_{n-m-k}(x_1,x_2 \ldots, x_m)),
     \end{split}
     \end{equation*}
     where $g_i: \F_q^k \rightarrow \F_q$ such that $g_1, g_2, \ldots, g_k$ form an orthogonal system in $\F_q$   and $h_j: \F_q^m \rightarrow \F_q$ is an arbitrary function for $1 \leq i \leq k$, $1\leq j \leq n-m-k.$ Then $F$ permutes $\F_q^n.$
\end{thm}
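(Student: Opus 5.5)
By Theorem~\ref{T31}, it suffices to show that
\[
g(X)=X+AB^tf(X)
\]
permutes $\F_q^m$. (I write $g$ for this map as in Theorem~\ref{T31}, distinct from the component functions $g_i$.) Under the standing assumption \eqref{restrictionAB}, the product $AB^t$ applied to a vector $Y=(y_1,\ldots,y_{n-m})\in\F_q^{n-m}$ simply returns $(y_1,\ldots,y_k,0,\ldots,0)\in\F_q^m$. Consequently the components $h_1,\ldots,h_{n-m-k}$ of $f$ are annihilated and play no role; this is why they may be chosen arbitrarily.

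Next I compute $g$ explicitly. For $X=(x_1,\ldots,x_m)$, the $i$-th coordinate of $g(X)$ for $1\le i\le k$ is
\[
x_i+\bigl(g_i(x_1,\ldots,x_k)-x_i\bigr)=g_i(x_1,\ldots,x_k),
\]
while for $k<i\le m$ the coordinate is just $x_i$. Thus
\[
g(X)=\bigl(g_1(x_1,\ldots,x_k),\ldots,g_k(x_1,\ldots,x_k),x_{k+1},\ldots,x_m\bigr).
\]

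The remaining step is bijectivity. The system $g_1,\ldots,g_k$ consists of $k$ functions in $k$ variables and is orthogonal. By the definition of orthogonality, each system $g_i(x_1,\ldots,x_k)=a_i$, $1\le i\le k$, has exactly $q^{k-k}=1$ solution for every $(a_1,\ldots,a_k)\in\F_q^k$. Hence $(x_1,\ldots,x_k)\mapsto(g_1,\ldots,g_k)$ is a bijection of $\F_q^k$. Taking its product with the identity on the last $m-k$ coordinates gives a bijection of $\F_q^m$. Explicitly, given a target $(b_1,\ldots,b_m)$, there is a unique $(x_1,\ldots,x_k)$ with $g_i(x_1,\ldots,x_k)=b_i$ for $i\le k$, and the remaining coordinates are forced to be $x_j=b_j$. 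Therefore $g$ permutes $\F_q^m$, and by Theorem~\ref{T31}, $F$ permutes $\F_q^n$.

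No step here presents a genuine obstacle. The only points needing care are reading off $AB^t f(X)$ correctly from the block form \eqref{restrictionAB}, in particular checking that exactly the first $k$ components of $f$ survive, and interpreting orthogonality with $r=s=k$ as bijectivity.
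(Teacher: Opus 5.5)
Your proposal is correct and follows the paper's proof essentially step for step: you reduce to $g(X)=X+AB^tf(X)$ via Theorem~\ref{T31}, use the block form \eqref{restrictionAB} to get $g(X)=(g_1,\ldots,g_k,x_{k+1},\ldots,x_m)$, and conclude from the orthogonality of $g_1,\ldots,g_k$. Your remark that orthogonality with $r=s=k$ gives exactly one solution for each target, hence a bijection of $\F_q^k$, simply makes explicit the step the paper describes as immediate.
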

\begin{proof}
    To prove the result, we invoke Theorem~\ref{T31}. Recall that 
\[
A : \F_q^n \to \F_q^m \quad \text{and} \quad B^t : \F_q^{n-m} \to \F_q^n
\]
satisfy \eqref{restrictionAB}. Then the associated mapping $g : \F_q^m \to \F_q^m$ is given by
\[
g(X)=X+AB^t f(X), \quad X=(x_1,x_2,\ldots,x_m) \in \F_q^m.
\]
Using the form of $AB^t$ given in \eqref{restrictionAB}, we obtain
\[
\begin{aligned}
\begin{split}
g(x_1, x_2, \ldots, x_m)=&(x_1, x_2, \ldots, x_m)+AB^tf(x_1, x_2, \ldots, x_m)
\\=& (x_1, x_2, \ldots, x_m) + AB^t(g_1(x_1, x_2, \ldots, x_k)-x_1, g_2(x_1, x_2, \ldots, x_k)-x_2, \ldots, \\&g_k(x_1, x_2, \ldots, x_k)-x_k, h_1(x_1, x_2, \ldots, x_m), \ldots, 
h_{n-m-k}(x_1, x_2, \ldots, x_m))
\\=&
    (g_1(x_1, x_2, \ldots, x_k), g_2(x_1, x_2, \ldots, x_k), \ldots, g_k(x_1, x_2, \ldots, x_k), x_{k+1}, \ldots, x_m).
\end{split}
\end{aligned}
\]

It follows immediately that $g$ is injective, since $g_1, g_2, \ldots, g_k$ form an orthogonal system in  $\F_q$. As $\F_q^m$ is finite, $g$ is therefore bijective, i.e., a permutation of $\F_q^m$. Consequently, by Theorem~\ref{T31}, the mapping $F$ permutes $\F_q^n$.

\end{proof}
\begin{rmk}
    There are total $q^{(n-m)q^m}$ functions from $\F_q^m$ to $\F_q^{n-m}.$  By Theorem \ref{T33}, we can extend  a total number of $(q^k)!q^{(n-m-k)q^m}$ functions to bijective functions $F$ over $\F_q^n.$ 
\end{rmk}
Next we study the complete permutation behaviour of function $F(X)=T(X+B^tf(AX))$ over $\F_q^n.$

\begin{thm}\label{T34}
    Let $F(X)=T(X+B^tf(AX))$, where $T \in \GL(n, \F_q)$, $A_{m \times n}$, $B_{(n-m)\times n}$ are full rank matrices over $\F_q$, and $0<m<n$, are integers. Then $F$ is a complete permutation over $\F_q^n$ if and only if $g(X)=X+AB^t(f(X))$ permutes $\F_{q}^m$ and one of the following holds:
    \begin{enumerate}
        \item $\operatorname{rank}(T+\mathbf{I_n})=n$ and $$h_1(X)=X+A(T+\mathbf{I_n})^{-1}TB^tf(X)$$ permutes $\F_q^m;$
        \item $\operatorname{rank}(T+\mathbf{I_n})=n-m$, $$h_2(X)=CTB^tf(X)$$ permutes $\F_q^m$ and $r_j \notin span\{R_1, R_2,\ldots, R_n, r_1, r_2, \ldots, r_{j-1}, r_{j+1}, \ldots, r_m\}$, where $R_i$'s are rows of the matrix $T + \mathbf{I_n}$, $r_j$'s are rows of the matrix $A$, $1 \leq i \leq n$, $1\leq j \leq m$, and $C_{m \times n}$ is a full rank matrix  over $\F_q$ such that $(T^t+\mathbf{I_n})C^t=0$;
        \item $n-m<\operatorname{rank}(T+\mathbf{I_n})=k<n$, $CTB^tf(AX)$ is an onto map and $F(X)+X$ is injective on each fiber of $CTB^tf(AX)$, where $C_{(n-k) \times n}$ is a full rank matrix over $\F_q$ such that $(T^t+\mathbf{I_n})C^t=0.$ 
    \end{enumerate}
\end{thm}
\begin{proof}
    From Theorem \ref{T31}, it is clear that $F$ permutes $\F_q^n$ if and only if $g(X)=X+AB^tf(X)$ permutes $\F_q^m$. To determine the completeness of $F$, we study the following two cases.

    \textbf{Case 1.} If $\operatorname{rank}(T+\mathbf{I_n})=n$, then consider the following diagram
$$\begin{CD}
\F_q^n @>(T+\mathbf{I_n})X+TB^tf(AX)>> \F_q^n\\
 @V {AX} VV   @VV A(T+\mathbf{I_n})^{-1}X.  V\\
 \F_q^m  @>>h_1(X)=X+A(T+\mathbf{I_n})^{-1}TB^tf(X)>  \F_q^m
\end{CD}$$
Since, $A$ is a full rank matrix, we have that $A: \F_q^n \rightarrow \F_q^m$ and $A(T+\mathbf{I_n})^{-1}: \F_q^n \rightarrow \F_q^m$ are onto maps. Moreover,
$$h_1\circ A=A(T+\mathbf{I_n})^{-1}\circ (F+\mathbf{I_n}).$$
Hence, by the AGW criterion, $F(X)+X$ permutes $\F_q^n$ if and only if $F(X)+X$ is injective on each fiber of $AX$ and $h_1(X)=X+A(T+\mathbf{I_n})^{-1}TB^tf(X)$ permutes $\F_q^m.$ Since $\operatorname{rank}(T+\mathbf{I_n})=n$, $F(X)+X$ is injective on each fiber of $AX$. Therefore, $F(X)+X$ permutes $\F_q^n$ if and only if $h_1(X)=X+A(T+\mathbf{I_n})^{-1}TB^tf(X)$ permutes $\F_q^m.$

\textbf{Case 2.} We assume that $\operatorname{rank}(T+\mathbf{I_n}) <n.$ If $\operatorname{rank}(T+\mathbf{I_n})\leq n-m$, then $\dim \ker(T+\mathbf{I_n})\geq m.$ Therefore, there exists a full rank matrix $C_{m\times n}$ such that $(T^t+\mathbf{I_n})C^t=0.$ Now, consider the following commutative diagram
$$\begin{CD}
\F_q^n @>(T+\mathbf{I_n})X+TB^tf(AX)>> \F_q^n\\
 @V {AX} VV   @VV CX.  V\\
 \F_q^m  @>>h_2(X)=CTB^tf(X)>  \F_q^m
\end{CD}$$
As $A$ and $C$ are full rank matrices, $AX$ and $CX$ are onto maps from $\F_q^n$ to $\F_q^m.$ By the AGW criterion, $F(X)+X$ permutes $\F_q^n$ if and only if $h_2(X)=CTB^tf(X)$ permutes $\F_q^m$ and $F(X)+X$ is injective on each fiber of $AX.$ Let $S_b=\{X\in \F_q^n \mid AX=b\}.$ We want to check the injectivity of $F(X)+X$ on the set $S_b.$ For this it suffices to show that 
 the following system has a unique solution in $\F_q^n$ for  $e \in \F_q^n$,
\[
\begin{aligned}
 (T+\mathbf{I_n})X+TB^tf(AX)&=e  \\
 AX&=b.
\end{aligned}
\]
From the above system, we have \[
\begin{bmatrix}
T+\mathbf{I_n}  \\
A
\end{bmatrix}X=\begin{bmatrix}
e-TB^tf(b)  \\
b
\end{bmatrix}.\] The system has a unique solution in $\F_q^n$ if and only if  \[
\operatorname{rank} \left(\begin{bmatrix}
T+\mathbf{I_n} \\
A
\end{bmatrix}\right)=n.\] If $\operatorname{rank} (T+\mathbf{I_n})<n-m$, then  \[\operatorname{rank} \left(\begin{bmatrix}
T+\mathbf{I_n}  \\
A
\end{bmatrix}\right ) \neq n.\]   Therefore,  \[\operatorname{rank} \left(\begin{bmatrix}
T+\mathbf{I_n}  \\
A
\end{bmatrix} \right )=n\]
if and only if $\operatorname{rank}(T+\mathbf{I_n})=n-m$ and $r_j \notin span\{R_1, R_2,\ldots, R_n, r_1, r_2, \ldots, r_{j-1}, r_{j+1}, \ldots, r_m\}$, where $R_i$'s are rows of the matrix $T +\mathbf{I_n}$ and $r_j$'s are rows of the matrix $A$, $1\leq i \leq n$, $1 \leq j \leq m.$

Next, if $n-m<\operatorname{rank}(T+\mathbf{I_n})<n.$ Let $\operatorname{rank}(T+\mathbf{I_n})=k$ and $\dim \ker(T+\mathbf{I_n})=n-k$. We consider the following commutative diagram 
$$\begin{CD}
\F_q^n @>(T+\mathbf{I_n})X+TB^tf(AX)>> \F_q^n\\
 @V {CTB^tf(AX)} VV   @VV CX,  V\\
 \F_q^{n-k}  @>>h_3(X)=X>  \F_q^{n-k}
\end{CD}$$
where $C_{(n-k) \times n}$ is a full rank matrix such that $(T^t+\mathbf{I_n})C^t=0.$ If $CTB^tf(AX)$ is not an onto map from $\F_q^n$ to $\F_q^{n-k}$, then $F(X)+X$ will not be permutation over $\F_q^n$. If $F(X)+X$ is permutation, then $C((T+\mathbf{I_n})X+TB^tf(AX))=CTB^tf(AX)$ is onto, which is a contradiction.
 Thus, by the AGW criterion, $F(X)+X$ permutes $\F_q^n$ if and only if $CTB^tf(AX)$ is onto map and $F(X)+X$ is injective on each fiber of $CTB^tf(AX).$
\end{proof}
\begin{exmp} Let $q=2^u$, where $u$ is a positive integer. Let $F(X)=T(X+B^tf(AX))$,
\begin{equation*}
\begin{aligned}
T &=
\begin{bmatrix}
0 & 1 & 0 & 0 & \cdots & 0 & 0 \\
0 & 0 & 1 & 0 & \cdots & 0 & 0 \\
0 & 0 & 0 & 1 & \cdots & 0 & 0 \\
\vdots & \vdots & \vdots & \vdots & \ddots & \vdots & \vdots \\
0 & 0 & 0 & 0 & \cdots & 1 & 0 \\
0 & 0 & 0 & 0 & \cdots & 0 & 1 \\
1 & 0 & 0 & 0 & \cdots & 0 & 1
\end{bmatrix},
\quad \quad A &=
\begin{bmatrix}
1 & 0 & \cdots & 0 \\
0 & 1 & \cdots & 0
\end{bmatrix}, \quad \quad
B &=
\begin{bmatrix}
\mathbf{I}_{n-2} & \mathbf{0}_{n-2 \times 2} \\
\end{bmatrix}
\end{aligned}
\end{equation*}
and $f: \F_q^2 \rightarrow \F_q^{n-2}$ defined as $$f(x_1, x_2)=(\phi(x_2), 0, 0,\ldots, 0),$$ where $\phi(x_2)+x_2$ is a permutation over $\F_q.$

    We have $\operatorname{rank}(T)=n$, $\operatorname{rank}(T+\mathbf{I_n})=n$, $$g(x_1,x_2)=(\phi(x_2)+x_1, x_2)$$ and $$h_1(x_1,x_2)=(x_1+\phi(x_2), x_2+\phi(x_2)).$$ It is easy to check that $g(x_1,x_2)$ and $h_1(x_1,x_2)$ permute $\F_q^2.$ Therefore, by Theorem \ref{T34}, $F$ is a complete permutation over $\F_q^n.$
\end{exmp}

\begin{exmp}
   Let $F(X)=T(X+B^tf(AX))$, 
\begin{equation*}
\begin{aligned}
T &=
\begin{bmatrix}
-1 & 0 & 0 & 0 & \cdots & 0 & 0 \\
0 & 0 & -1 & 0 & \cdots & 0 & 0 \\
0 & 0 & 0 & -1 & \cdots & 0 & 0 \\
\vdots & \vdots & \vdots & \vdots & \ddots & \vdots & \vdots \\
0 & 0 & 0 & 0 & \cdots & -1 & 0 \\
0 & 0 & 0 & 0 & \cdots & 0 & -1 \\
0 & -1 & 0 & 0 & \cdots & 0 & 0
\end{bmatrix},
\quad \quad
A &=
\begin{bmatrix}
1 & 0 & \cdots & 0 \\
0 & 0 & \cdots & 1
\end{bmatrix},
\quad \quad
B &=
\begin{bmatrix}
\mathbf{I}_{n-2} & \mathbf{0}_{(n-2)\times 2}
\end{bmatrix}
\end{aligned}
\end{equation*}
and $f: \F_q^2 \rightarrow \F_q^{n-2}$ defined as $$f(x_1, x_2)=(\phi_1(x_2), \phi_2(x_1), 0,\ldots, 0),$$ where $\phi_1$ and $\phi_2$ are permutations over $\F_q.$

In this case, $\operatorname{rank}(T)=n$, $\operatorname{rank}(T+\mathbf{I_n})=n-2$, \[C=
\begin{bmatrix}
1 & 0 & 0 & \cdots & 0 \\
0 & 1 & 1 & \cdots & 1
\end{bmatrix},
\] $g(x_1,x_2)=(x_1+\phi_1(x_2),x_2)$ and $h_2(x_1,x_2)=(-\phi_1(x_2), -\phi_2(x_1))$. One can easily check that $g(x_1, x_2)$ and $h_2(x_1, x_2)$ are permutations over $\F_q^2.$ Thus, by Theorem \ref{T34}, $F$ is a complete permutation over $\F_q^n.$
\end{exmp}

\begin{exmp}
   Let $F(X)=T(X+B^tf(AX))$, 
\begin{equation*}
\begin{aligned}
T &=
\begin{bmatrix}
-1 & 0 & 0 & 0 & \cdots &0 & 1 & 1 \\
0 & 0 & 1 & 0 & \cdots  & 0& 0 & 0 \\
0 & 0 & 0 & 1 & \cdots & 0 & 0 & 0 \\
\vdots & \vdots & \vdots & \vdots & \ddots & \vdots & \vdots & \vdots\\
0 & 0 & 0 & 0 & \cdots &1  & 0 & 0 \\
0 & 0 & 0 & 0 & \cdots & 0 & 1 & 0 \\
0 & 0 & 0 & 0 & \cdots & 0 & 0 & 1 \\
1 & 1 & 0 & 0 & \cdots & 0 & 0 & 0
\end{bmatrix},
\quad
A &=
\begin{bmatrix}
0 & 0 & \cdots & 1 & 0 \\
0 & 0 & \cdots & 0 & 1
\end{bmatrix},
\quad
B &=
\begin{bmatrix}
1 & 0 & 0 & 0 & 0 &\cdots & 0 \\
0 & 0 & 0 & 1 & 0 &\cdots & 0 \\
0 & 0 & 0 & 0 & 1 &\cdots & 0 \\
0 & 0 & 0 & 0 & 0 &\cdots & 0 \\
\vdots & \vdots & \vdots &\vdots & \vdots & \ddots & \vdots \\
0 & 0 & 0 & 0 & 0 & \cdots & 1 
\end{bmatrix}
\end{aligned}
\end{equation*}
and $f: \F_q^2 \rightarrow \F_q^{n-2}$ defined as $$f(x_1, x_2)=(\phi_1(x_2), \phi_2(x_2), \ldots, \phi_{n-4}(x_2), 0, \alpha x_2),$$ where $\phi_1(x_2)$ is a permutation over $\F_q$ and $\alpha+1 \neq 0.$

We have $\operatorname{rank}(T)=n$, $\operatorname{rank}(T+\mathbf{I_n})=n-1$, \[C=
\begin{bmatrix}
1 & 0 & 0 &\cdots & 0 & -1 & 0 \\
\end{bmatrix},
\]

 $g(x_1, x_2)=(x_1,x_2+\alpha x_2)$ and $CTB^tf(AX)=-\phi_1(x_n)$. It is clear that

\[
\begin{aligned}
F(X)+X=&(x_{n-1}+x_n-\phi_1(x_n)+\alpha x_n, x_2+x_3, x_3+x_4+\phi_2(x_n),  x_4+x_5+\phi_3(x_n), \ldots,\\
&x_{n-2}+x_{n-3}+\phi_{n-4}(x_n), x_{n-1}+x_{n-2}, x_{n-1}+x_n+\alpha(x_n), x_1+x_2+x_n+\phi_1(x_n))
\end{aligned}
\]
is injective on each fiber of $CTB^tf(AX)$. Hence, by Theorem \ref{T34}, $F$ is a complete permutation over $\F_q^n.$
\end{exmp}

\section{Other complete permutations and cryptographical applications}\label{S4}

  In this section, we first demonstrate that our result generalizes the result of Sun, Li, Guo, and Qu (2021). We then explore a cryptographic application involving the Lai--Massey structure.\\

\subsection{Generalizations of Sun, Li, Guo and Qu (2021) construction}

In Lemma \ref{l22}, the authors studied the complete permutations over $\F_q^n$ of the form 
\begin{equation*}
    \Psi(x)=M(X+\psi(a^tX)),
\end{equation*}
and provide a necessary and sufficient condition for $\Psi(X)$ to be a complete permutation polynomial over $\F_q^n.$
In their work, they consider $M \in \GL(n , \F_q)$, $X, a \in \F_q^n$, $\psi=(\psi_1, \psi_2, \ldots, \psi_n)$ such that $\psi_i$'s are polynomials over $\F_q.$ The following theorem generalizes their result, in particular, we characterize the complete permutation behaviour of the function
\begin{equation*}
    \Psi(X)=M(X+\psi(AX))
\end{equation*}
over $\F_q^{n}$, where $A_{m \times n}$ is a full rank matrix, $\psi=(\psi_1, \psi_2, \ldots, \psi_n)$ such that $\psi_i: \F_q^m  \rightarrow \F_q$ is an arbitrary function. 

\begin{thm}\label{T35}
    Let $\Psi(X)=M(X+\psi(AX))$, where $M \in \GL(n, \F_q)$, $\psi=(\psi_1, \psi_2, \ldots, \psi_n)$ such that $\psi_i: \F_q^m \rightarrow \F_q$, $A_{m \times n}$ is a full rank matrix, and $0<m<n$ are integers. Then $\Psi$ is a complete permutation over $\F_q^n$ if and only if $g(X)=X+A(\psi(X))$ permutes $\F_{q}^m$ and one of the following holds:
    \begin{enumerate}
        \item $\operatorname{rank}(M+\mathbf{I_n})=n$ and $$h_1(X)=X+A(M+\mathbf{I_n})^{-1}M\psi(X)$$ permutes $\F_q^m;$
         \item $\operatorname{rank}(M+\mathbf{I_n})=n-m$, $$h_2(X)=CM\psi(X)$$ permutes $\F_q^m$ and $r_j \notin span\{R_1, R_2,\ldots, R_n, r_1, r_2, \ldots, r_{j-1}, r_{j+1}, \ldots, r_m\}$, where $R_i$'s are rows of the matrix $M+ \mathbf{I_n}$, $r_j$'s are rows of the matrix $A$, $1\leq i \leq n$, $1\leq j \leq m$, and $C_{m \times n}$ is a full rank matrix  over $\F_q$ such that $(M^t+\mathbf{I_n})C^t=0$;
        \item $n-m<\operatorname{rank}(M+\mathbf{I_n})=k<n$, $CM\psi(AX)$ is an onto map and $\Psi(X)+X$ is injective on each fiber of $CM\psi(AX)$, where $C_{(n-k) \times n}$ is a full rank matrix over $\F_q$ such that $(M^t+\mathbf{I_n})C^t=0.$  
    \end{enumerate}
\end{thm}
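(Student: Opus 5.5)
The plan is to mirror the proof of Theorem~\ref{T34} almost verbatim. The only structural change is that the inner function now lands in $\F_q^n$ directly, rather than passing through $B^t$. Concretely, $\Psi(X)=M(X+\psi(AX))$ is the special case of $T(X+B^t f(AX))$ obtained by replacing $B^tf$ with $\psi:\F_q^m\to\F_q^n$. None of the arguments of Theorems~\ref{T31} and~\ref{T34} used any property of $B^t f$ beyond its being a function of $AX$ with values in $\F_q^n$, so I would simply rerun them with $M$ in place of $T$ and $\psi$ in place of $B^t f$.

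\textbf{Step 1 (permutation of $\Psi$).} I would use the square $A M^{-1}\circ \Psi = g\circ A$, where $g(X)=X+A\psi(X)$. Both $A$ and $AM^{-1}$ are onto $\F_q^m$ since they have full rank. On a fiber $\{AX=b\}$ the map is $X\mapsto M(X+\psi(b))$, which is injective. The AGW criterion (Lemma~\ref{AGW}) then gives: $\Psi$ permutes $\F_q^n$ if and only if $g$ permutes $\F_q^m$.

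\textbf{Step 2 ($\Psi(X)+X=(M+\mathbf{I_n})X+M\psi(AX)$).} I would split on $\operatorname{rank}(M+\mathbf{I_n})$.
\begin{enumerate}
\item If the rank is $n$, I would take the square $A(M+\mathbf{I_n})^{-1}\circ(\Psi+\mathrm{id})=h_1\circ A$, where $h_1(X)=X+A(M+\mathbf{I_n})^{-1}M\psi(X)$. On each fiber $\{AX=b\}$ the map is affine with invertible linear part, so it is injective, and AGW yields condition (1).
\item If the rank is at most $n-m$, I would pick a full-rank $C_{m\times n}$ with $C(M+\mathbf{I_n})=0$. Then $C\circ(\Psi+\mathrm{id})=h_2\circ A$, where $h_2=CM\psi$. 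On the fiber $\{AX=b\}$, injectivity amounts to the unique solvability of
\[
\begin{bmatrix} M+\mathbf{I_n}\\ A\end{bmatrix}X=\begin{bmatrix} e-M\psi(b)\\ b\end{bmatrix},
\]
that is, to the stacked matrix having rank $n$. This forces $\operatorname{rank}(M+\mathbf{I_n})\ge n-m$, hence equality. It also forces the rows $r_j$ of $A$ to be independent modulo the row space of $M+\mathbf{I_n}$, which is the stated span condition. Conversely, if rank $n$ fails for the stacked matrix, a nonzero kernel vector gives two distinct points in one fiber with the same image. This yields condition (2), and rules out ranks below $n-m$.
\item If $n-m<\operatorname{rank}(M+\mathbf{I_n})=k<n$, I would take $C_{(n-k)\times n}$ spanning the left kernel and use the square $C\circ(\Psi+\mathrm{id})=\mathrm{id}\circ (CM\psi(A\,\cdot))$. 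For AGW the vertical map $CM\psi(AX)$ must be surjective, and this is also necessary: if $\Psi+\mathrm{id}$ is bijective, then composing with the onto map $C$ gives exactly $CM\psi(AX)$, which is therefore onto. AGW then gives condition (3).
\end{enumerate}
Combining Step 1 with Step 2 gives the theorem.

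\textbf{Main obstacle.} The delicate point is case (2). There I must show that the fiber-injectivity condition of AGW is \emph{equivalent} to the rank condition on the stacked matrix, not merely implied by it. I must also justify that this rank condition is equivalent to the stated row condition on $r_j$ together with $\operatorname{rank}(M+\mathbf{I_n})=n-m$. I would handle this by noting that, inside a fixed fiber, $\Psi(X)+X$ differs from the linear map $X\mapsto(M+\mathbf{I_n})X$ only by a constant. Hence it is injective on the fiber exactly when $\ker(M+\mathbf{I_n})\cap\ker A=0$. Counting dimensions with $\dim\ker(M+\mathbf{I_n})\ge m$ and $\dim\ker A=n-m$ then gives the rank equality.
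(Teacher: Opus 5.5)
Your proposal is correct and follows the same route as the paper. The paper's proof of Theorem~\ref{T35} simply reruns the AGW argument of Theorems~\ref{T31} and~\ref{T34} with $T$ replaced by $M$ and $B^tf$ by $\psi$, using the same commutative squares and the same case split on $\operatorname{rank}(M+\mathbf{I_n})$. Your treatment of case (2), via $\ker(M+\mathbf{I_n})\cap\ker A=0$ and dimension counting, is a bit more precise than the paper's ``unique solution'' phrasing. The only slip is calling $\Psi$ a special case of $T(X+B^tf(AX))$, when it is the more general form; this is harmless because you rerun the argument rather than cite Theorem~\ref{T34}.
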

\begin{proof}
    The proof follows similar arguments as those used in Theorem \ref{T34}.   We therefore skip the details. 
\end{proof}
\begin{rmk}
It is important to note that the above generalization encompasses a substantially richer class of functions over $\F_q^n$ than those considered in Lemma~\ref{l22}. In the setting of Lemma~\ref{l22}, $a\in\F_q^n$ and each $\psi_i:\F_q\rightarrow\F_q$ is a polynomial for $1\leq i\leq n$. In contrast, our generalization allows $A$ to be an $m\times n$ matrix and each $\psi_i$ to be an arbitrary function from $\F_q^m$ to $\F_q$. Thus, permitting $\psi_i:\F_q^m\rightarrow\F_q$ significantly increases the flexibility in the choice of the component functions and, consequently, provides a broader range of possibilities for constructing complete permutation polynomials over $\F_q^n$ than the more restrictive choice with $m=1$. 
\end{rmk}

Linear complete permutation polynomials play an important role in various cryptographic constructions; see, for instance, \cite{AC, LLG}. Motivated by these applications, in the following theorem, we present a class of linear complete permutation polynomials obtained by considering the function $\psi=(\psi_1, \psi_2, \ldots, \psi_n)$ such that $\psi_i: \F_q^m \rightarrow \F_q$ is defined as $\psi_i=d_{i1}x_1+d_{i2}x_2+\cdots+d_{im}x_m$ for all $1\leq i \leq m$ and $\psi_i=0$ for all $i>m.$
\begin{thm}\label{T36}
     Let $\Psi(X)=M(X+\psi(AX))$, where $M \in \GL(n, \F_q)$, $\psi=(\psi_1, \psi_2, \ldots, \psi_n)$ such that $\psi_i: \F_q^m \rightarrow \F_q$, $A_{m \times n}$ is a full rank matrix, and $0<m<n$ are integers.  Then $\Psi$ is a complete permutation over $\F_q^n$ if and only if $\mathbf{I_m}+A_1D$  is invertible over $\F_q$, where $A_1=[a_{ij}]_{m \times m}$, $D=[d_{ij}]_{m \times m}$, $a _{ij}, d_{ij} \in \F_q$, $1 \leq i, j \leq m$, and one of the following holds:
     \begin{enumerate}
        \item $\operatorname{rank}(M+\mathbf{I_n})=n$ and $\mathbf{I_m}+A(M+\mathbf{I_n})^{-1}MD_1$ is invertible over $\F_q$, where 
        \[D_1=\begin{bmatrix}
D \\
\mathbf{0}_{(n-m) \times m}
\end{bmatrix};\]
 \item $\operatorname{rank}(M+\mathbf{I_n})=n-m$, $CMD_1$ is invertible over $\F_q$ and 
 \(r_j \notin span\{(R_1, R_2,\ldots, R_n, r_1, r_2,
 \ldots, \\ r_{j-1}, r_{j+1}, \ldots, r_m\}\), 
 where $R_i$'s are rows of the matrix $M+ \mathbf{I_n}$, $r_j$'s are rows of the matrix $A$, $1\leq i \leq n$, $1 \leq j \leq m$, and $C_{m \times n}$ is a full rank matrix over $\F_q$ such that $(M^t+\mathbf{I_n})C^t=0$;
        \item $n-m<\operatorname{rank}(M+\mathbf{I_n})=k<n$, $CM\psi(AX)$ is an onto map and $\Psi(X)+X$ is injective on each fiber of $CM\psi(AX)$, where $C_{(n-k) \times n}$ is a full rank matrix over $\F_q$ such that $(M^t+\mathbf{I_n})C^t=0.$ 
    \end{enumerate}
\end{thm}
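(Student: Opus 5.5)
The plan is to specialize Theorem~\ref{T35} to the linear choice of $\psi$ and to translate each permutation condition on $\F_q^m$ into an invertibility condition on a matrix. With $\psi_i(X)=d_{i1}x_1+\cdots+d_{im}x_m$ for $i\le m$ and $\psi_i=0$ for $i>m$, we have $\psi(X)=D_1X$ for $X\in\F_q^m$, with $D_1$ as in the statement. Every auxiliary map in Theorem~\ref{T35} then becomes $\F_q$-linear. A linear map on a finite-dimensional space is a permutation if and only if its matrix is invertible, and I will use this repeatedly.

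First, the map $g(X)=X+A\psi(X)$ becomes $g(X)=(\mathbf{I_m}+AD_1)X$. Writing $A=[A_1\mid A_2]$ with $A_1$ the first $m$ columns, we get $AD_1=A_1D$, since the zero block of $D_1$ kills $A_2$. So $g$ permutes $\F_q^m$ if and only if $\mathbf{I_m}+A_1D$ is invertible. Here I am interpreting $A_1$ as the leading $m\times m$ block of $A$, which matches $A_1=[a_{ij}]_{1\le i,j\le m}$.

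Next come the three cases of Theorem~\ref{T35}.
\begin{enumerate}
\item In case (1), $h_1(X)=X+A(M+\mathbf{I_n})^{-1}M\psi(X)=\bigl(\mathbf{I_m}+A(M+\mathbf{I_n})^{-1}MD_1\bigr)X$. This is a permutation exactly when the bracketed $m\times m$ matrix is invertible.
\item In case (2), $h_2(X)=CM\psi(X)=CMD_1X$, where $CMD_1$ is $m\times m$. So $h_2$ permutes $\F_q^m$ if and only if $CMD_1$ is invertible. The rank and row-span condition on $r_j$ carries over verbatim.
\item Case (3) is copied unchanged from Theorem~\ref{T35}(3), with $\psi(AX)=D_1AX$.
\end{enumerate}
Combining these with the equivalence for $g$ gives the stated characterization. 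This holds because Theorem~\ref{T35} is an if-and-only-if, and each replacement is an equivalence.

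There is no serious obstacle. The only points needing care are the bookkeeping identity $AD_1=A_1D$ and the dimension checks: in case (2), $C$ is $m\times n$, $M$ is $n\times n$ and $D_1$ is $n\times m$, so $CMD_1$ is square. One should also note that the hypotheses of Theorem~\ref{T35} are met, since $\psi_i$ may be arbitrary functions on $\F_q^m$.
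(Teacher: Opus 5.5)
Your proposal is correct and follows the paper's proof essentially line by line. You specialize Theorem~\ref{T35} to $\psi(X)=D_1X$ and note that $g$, $h_1$, $h_2$ become linear maps with matrices $\mathbf{I_m}+A_1D$ (via $AD_1=A_1D$), $\mathbf{I_m}+A(M+\mathbf{I_n})^{-1}MD_1$ and $CMD_1$, so each permutation condition is equivalent to invertibility, and you leave case (3) as in Theorem~\ref{T35}. Your explicit block identity $AD_1=A_1D$ and the dimension check for $CMD_1$ are slightly more careful than the paper's write-up, but the argument is the same.
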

\begin{proof}
    By Theorem \ref{T35}, $\Psi$ is a permutation over $\F_q^n$ if and only if $g(X)=X+A(\psi(X))$ permutes $\F_q^m.$ We have 
    \begin{equation*}
    \begin{split}
    g(x_1,x_2, \ldots, x_m)=&(x_1,x_2, \ldots, x_m)+A(\psi(x_1, x_2, \ldots, x_m))
    \\=&(x_1,x_2, \ldots, x_m)+A(\psi_1(x_1, x_2, \ldots, x_m), \psi_2(x_1, x_2, \ldots, x_m), \ldots, \psi_n(x_1, x_2, \ldots, x_m))\\=&
        (x_1,x_2, \ldots, x_m)+A(d_{11}x_1+d_{12}x_2+\cdots+d_{1m}x_m,
         d_{21}x_1+d_{22}x_2+\cdots+d_{2m}x_m, \\&\ldots, d_{m1}x_1+d_{m2}x_2
        +\cdots+d_{mm}x_m, 0, 0, \ldots, 0)\\
    =&(x_1,x_2, \ldots, x_m)+A\begin{bmatrix}
D \\
\mathbf{0}_{(n-m) \times m}
\end{bmatrix}(x_1,x_2, \ldots, x_m)
\\=&(x_1,x_2, \ldots, x_m)+A_1D(x_1,x_2, \ldots, x_m)
\\=&(\mathbf{I_m}+A_1D)(x_1,x_2, \ldots, x_m).
\end{split}
\end{equation*}
Therefore, $g$ permutes $\F_q^m$ if and only if $\mathbf{I_m}+A_1D$ is invertible over $\F_q.$ Furthermore, if $\operatorname{rank}(M+\mathbf{I_n})=n$, then $\Psi(X)+X$ permutes $\F_q^n$ if and only if $h_1(X)=X+A(M+\mathbf{I_n})^{-1}M\psi(X)$ permutes $\F_q^m.$ We have
\begin{equation*}
 \begin{split}   
h_1(x_1, x_2, \ldots, x_m)=&(x_1, x_2, \ldots, x_m)+A(M+\mathbf{I_n})^{-1}M\psi(x_1, x_2, \ldots, x_m)
\\=&(x_1, x_2, \ldots, x_m)+A(M+\mathbf{I_n})^{-1}M(\psi_1(x_1, x_2, \ldots, x_m), \\& \psi_2(x_1, x_2, \ldots, x_m), \ldots, \psi_n(x_1, x_2, \ldots, x_m))
 \\=&
        (x_1,x_2, \ldots, x_m)+A(M+\mathbf{I_n})^{-1}M(d_{11}x_1+d_{12}x_2+\cdots+d_{1m}x_m,\\& d_{21}x_1+d_{22}x_2+\cdots+d_{2m}x_m, \ldots, d_{m1}x_1+d_{m2}x_2
        +\cdots+d_{mm}x_m, 0, 0, \ldots, 0)
\\=&(x_1, x_2, \ldots, x_m)+A(M+\mathbf{I_n})^{-1}MD_1(x_1, x_2, \ldots, x_m)
\\=&(\mathbf{I_m}+A(M+\mathbf{I_n})^{-1}MD_1)(x_1, x_2, \ldots, x_m).
\end{split}
\end{equation*}
Hence, $h_1(X)=X+A(M+\mathbf{I_n})^{-1}M\psi(X)$ permutes $\F_q^m$ if and only if $\mathbf{I_m}+A(M+\mathbf{I_n})^{-1}MD_1$ is invertible over $\F_q.$

If $\operatorname{rank}(M+\mathbf{I_n})=n-m$ and $r_j \notin span\{R_1, R_2,\ldots, R_n, r_1, r_2, \ldots, r_{j-1}, r_{j+1}, \ldots, r_m\}$, where $R_i$'s are rows of the matrix $M+ \mathbf{I_n}$ and $r_j$'s are rows of the matrix $A$, $1\leq i \leq n$, $1\leq j \leq m$, and $C_{m \times n}$ is a full rank matrix such that $(M^t+\mathbf{I_n})C^t=0$, then $\Psi(X)+X$ is permutation over $\F_q^n$ if and only if $h_2(X)=CM\psi(X)$ permutes $\F_q^m.$ We have
\begin{equation*}
\begin{split}
h_2(x_1, x_2, \ldots, x_m)=&CM\psi(x_1, x_2, \ldots, x_m)
\\=&CM(\psi_1(x_1, x_2, \ldots, x_m), \psi_2(x_1, x_2, \ldots, x_m), \ldots, \psi_n(x_1, x_2, \ldots, x_m))
\\=&CM(d_{11}x_1+d_{12}x_2+\cdots+d_{1m}x_m, d_{21}x_1+d_{22}x_2+\cdots+d_{2m}x_m, \ldots, \\&d_{m1}x_1+d_{m2}x_2+\cdots+d_{mm}x_m, 0, 0, \ldots, 0)\\=&CMD_1(x_1, x_2, \ldots, x_m).
\end{split}
\end{equation*}
Consequently, $h_2(X)=CM\psi(X)$ permutes $\F_q^m$ if and only if $CMD_1$ is invertible over $\F_q.$

If $n-m<\operatorname{rank}(M+\mathbf{I_n})<n$, then result follows from Theorem \ref{T35}.
\end{proof}

\begin{exmp}Let $q=2^u$, where $u$ is a positive integer. Let $\Psi(X)=M(X+\psi(AX))$, where
\begin{equation*}
\begin{aligned}
M &=
\begin{bmatrix}
0 & 1 & 0 & 0 & \cdots & 0 & 0 \\
0 & 0 & 1 & 0 & \cdots & 0 & 0 \\
0 & 0 & 0 & 1 & \cdots & 0 & 0 \\
\vdots & \vdots & \vdots & \vdots & \ddots & \vdots & \vdots \\
0 & 0 & 0 & 0 & \cdots & 1 & 0 \\
0 & 0 & 0 & 0 & \cdots & 0 & 1 \\
1 & 0 & 0 & 0 & \cdots & 0 & 1
\end{bmatrix},
\quad \quad
A &=
\begin{bmatrix}
1 & 0 & \cdots & 0 \\
0 & 1 & \cdots & 0
\end{bmatrix},
\quad \quad
\end{aligned}
\end{equation*}
$\psi=(\psi_1, \psi_2, \ldots, \psi_n)$, $\psi_1=a_1x_1+a_2x_2+a_3x_3$, $\psi_2=b_1x_1+b_2x_2+b_3x_3$, $\psi_3=c_1x_1+c_2x_2+c_3x_3$, and $\psi_i=0$ for all $i>3.$ Then $\Psi$ is a complete permutation over $\F_q^n$ if and only if 
$$\displaystyle \sum_{j=1}^{3}\sum_{k=1}^{3}\left(a_kb_{6-(j+k)}\right)c_j \neq a_2b_1+a_1b_2+a_3c_1+b_3c_2+a_1c_3+b_2c_3+a_1+b_2+c_3+1$$ and 
   \begin{equation*}
   \begin{split}
    \displaystyle \sum_{j=1}^{3}\sum_{k=1}^{3}\left(a_kb_{6-(j+k)}\right)c_j \neq a_3b_1+a_1b_3+a_2c_1+a_3c_1+a_1c_2+a_3c_2+b_3c_2+a_1c_3+a_2c_3+b_2c_3+a_1\\+a_3+b_3+c_2+c_3+1.
    \end{split}
    \end{equation*}
\end{exmp}
\begin{proof}
    Since $\operatorname{rank}(M)=rank(M+\mathbf{I_n})=n$, the result follows from Theorem \ref{T36}.
%
\end{proof}
\begin{exmp}
    Let $\Psi(X)=M(X+\psi(AX))$, where
\begin{equation*}
\begin{aligned}
T &=
\begin{bmatrix}
-1 & 0 & 0 & 0 & \cdots & 0 & 0 \\
0 & 0 & -1 & 0 & \cdots & 0 & 0 \\
0 & 0 & 0 & -1 & \cdots & 0 & 0 \\
\vdots & \vdots & \vdots & \vdots & \ddots & \vdots & \vdots \\
0 & 0 & 0 & 0 & \cdots & -1 & 0 \\
0 & 0 & 0 & 0 & \cdots & 0 & -1 \\
0 & -1 & 0 & 0 & \cdots & 0 & 0
\end{bmatrix},
\quad \quad
A &=
\begin{bmatrix}
1 & 0 & \cdots & 0 \\
0 & 0 & \cdots & 1
\end{bmatrix},
\\[1em]
\end{aligned}
\end{equation*}
$\psi=(\psi_1, \psi_2, \ldots, \psi_n)$, $\psi_1=a_1x_1+a_2x_2$, $\psi_2=b_1x_1+b_2x_2$, and $\psi_i=0$ for all $i>2.$ Then $\Psi$ is a complete permutation over $\F_q^n$ if and only if $a_1+1 \neq 0$ and $a_1b_2-a_2b_1 \neq 0.$
\end{exmp}
\begin{proof}
    Since $\operatorname{rank}(M)=n$ and $\operatorname{rank}(M+\mathbf{I_n})=n-2$, the result follows from Theorem \ref{T36}.
\end{proof}

\subsection{A cryptographical application}

The Lai--Massey scheme is a symmetric-key cryptographic construction used in the design of block ciphers, with IDEA and FOX being prominent examples of Lai--Massey-type structures. Vaudenay \cite{V} established the fundamental role of complete permutations (orthomorphisms) in the Lai--Massey scheme by requiring a permutation $\sigma$ such that $X\mapsto \sigma(X)-X$ is also a permutation. Subsequently, Luo, Lai and Gong \cite{LLG}  employed an explicit linear orthomorphism, $\sigma(X,Y)=(Y,X\oplus Y)$, in their pseudorandomness analysis of the Lai--Massey scheme, and studied the role of orthomorphisms in the extended Lai--Massey construction. Later, Aragona and Civino \cite{AC} investigated Lai--Massey schemes with linear orthomorphisms from an algebraic and group-theoretic perspective. These works highlight that complete permutations are a fundamental component of Lai--Massey constructions, although the use of broader classes of explicit complete permutation polynomials over finite fields remains relatively unexplored. Now we give the explicit Lai-Massey structure by using our complete permutations.
Let $P$ be a plaintext of $2n$ symbols over $\F_q$, or
$2n\log_2 q$ bits when $q$ is a power of two, and divide into two elements of $\F_{q}^n$:
\[
    P=(L_0,R_0)\in \F_{q}^n\times \F_{q}^n.
\]
Let
\[
    F_{K_i}: \F_{q}^n \longrightarrow \F_{q}^n
\]
be the keyed round function in round $i$. The function $F_{K_i}$ need not be
invertible. Given the input $(L_{i-1},R_{i-1})$ to round $R_{K_i}$, calculate
\begin{align*}
    \Delta_i &= L_{i-1}+R_{i-1}, \\
    T_i &= F_{K_i}(\Delta_i), \\
    U_i &= L_{i-1}+T_i, \\
    V_i &= R_{i-1}+T_i,\\
    L_i &= \Psi (U_i)
         =M\bigl(U_i+\psi(AU_i)\bigr), \\
    R_i &= V_i. 
\end{align*}
Thus the $i$th round $R_{K_i}$ is the permutation
\begin{equation*}
\boxed{
\begin{aligned}
\mathcal R_{K_i}(L,R)
  =\Bigl(&M\bigl(L+F_{K_i}(L+R)
       +\psi(A(L+F_{K_i}(L+R)))\bigr),\\
       &R+F_{K_i}(L+R)\Bigr).
\end{aligned}}
\end{equation*}
\begin{figure}[H]
\centering
\begin{tikzpicture}[
    >=Latex,
    line/.style={draw,thick,-{Latex[length=2.2mm]}},
    box/.style={draw,rounded corners,minimum width=2.8cm,
                minimum height=8mm,align=center},
    op/.style={draw,circle,minimum size=7mm,inner sep=0pt,fill=white},
    every node/.style={font=\small}
]
\node (L0) at (-3.2,0) {$L_{i-1}$};
\node (R0) at ( 3.2,0) {$R_{i-1}$};
\node[op] (minus) at (0,-1.0) {$-$};
\node[box] (F) at (0,-2.25) {$F_{K_i}$};
\node (T) at (0,-3.15) {$T_i$};
\node[op] (plusL) at (-3.2,-4.15) {$+$};
\node[op] (plusR) at ( 3.2,-4.15) {$+$};
\node[box,minimum width=3.5cm] (sigma) at (-3.2,-5.45)
    {$\Psi(X)=M(X+\psi(AX))$};
\node (L1) at (-3.2,-6.55) {$L_i$};
\node (R1) at ( 3.2,-6.55) {$R_i$};
\draw[line] (L0) -- (plusL);
\draw[line] (R0) -- (plusR);
\draw[line] (L0) |- (minus.west);
\draw[line] (R0) |- (minus.east);
\draw[line] (minus) -- node[right] {$\Delta_i$} (F);
\draw[line] (F) -- (T);
\draw[line] (T) -| (plusL);
\draw[line] (T) -| (plusR);
\draw[line] (plusL) -- node[right] {$U_i$} (sigma);
\draw[line] (sigma) -- (L1);
\draw[line] (plusR) -- node[right] {$V_i$} (R1);
\end{tikzpicture}
\caption{One Lai--Massey encryption round using the complete permutation
from $\Psi$.}
\label{fig:enc-round}
\end{figure}

For an $r$-round cipher, repeat $R_{K_i}$ round for $i=1,2,\ldots,r$, that is,
\[
\mathcal R_{K_r}\circ\mathcal R_{K_{r-1}}
      \circ\cdots\circ\mathcal R_{K_1},
\]
and the ciphertext is
\(
    C=(L_r,R_r).
\)
\begin{figure}[H]
\centering
\begin{tikzpicture}[
    >=Latex,
    line/.style={draw,thick,-{Latex[length=2.2mm]}},
    round/.style={draw,rounded corners,minimum width=2.2cm,
                  minimum height=1cm,align=center},
    every node/.style={font=\small}
]
\node (P) {$P=(L_0,R_0)$};
\node[round,right=8mm of P] (r1) {Round $1$\\$F_{K_1},\Psi$};
\node[round,right=8mm of r1] (r2) {Round $2$\\$F_{K_2},\Psi$};
\node[right=8mm of r2] (dots) {$\cdots$};
\node[round,right=8mm of dots] (rr) {Round $r$\\$F_{K_r},\Psi$};
\node[right=8mm of rr] (C) {$C=(L_r,R_r)$};
\draw[line] (P)--(r1);
\draw[line] (r1)--(r2);
\draw[line] (r2)--(dots);
\draw[line] (dots)--(rr);
\draw[line] (rr)--(C);
\end{tikzpicture}
\caption{The complete $r$-round Lai--Massey encryption scheme.}
\end{figure}
The Lai--Massey structure presented above together with our complete permutation polynomials has potential applications in the construction of cryptographic schemes.

\section{Conclusion}\label{s5}
 We studied the complete permutation polynomials over $\F_q^n$. In particular, we investigated a class of mappings $F(X)=T(X+B^tf(AX))$ without imposing the restriction that $AB^t=0$ as was considered in the work of Gravel and Panario (2023). We established a criteria for the permutation and complete permutation properties of the mapping $F(X)=T(X+B^tf(AX))$ over $\mathbb{F}_{q}^n$. Furthermore, we generalized a result of Sun, Li, Guo, and Qu (2021) by characterizing the complete permutation behavior of the mapping $\Psi(X)=M(X+\psi(AX))$ over $\F_q^n$.  In addition, we demonstrated how the complete permutation polynomials obtained in this work can be incorporated into the Lai--Massey scheme for cryptographic applications.

\section*{Acknowledgments}
  Ramandeep Kaur acknowledges Carleton University for providing all the facilities and research environment during her visit to Carleton University where the major part of the work was carried out.
Ramandeep Kaur and Hridesh Kumar are supported by the Prime Minister’s Research Fellowship, under PMRF IDs 3003658 and 3002900, respectively, at IIT Jammu.
Daniel Panario and Qiang Wang were supported by NSERC (Canada) under Grant No.~RGPIN-2024-05342 and Grant No.~RGPIN-2023-04673, respectively.


\begin{thebibliography}{99}
\bibitem{AGW} A. Akbary, D. Ghioca and Q. Wang, On constructing permutations of finite fields, Finite Fields Appl. 17 (2011), 51--67.
\bibitem{AC} R. Aragona and R. Civino, On invariant subspaces in the Lai–Massey scheme and a primitivity reduction, Mediterr. J. Math. 18 (4) (2021), 165.
\bibitem{BGQZ}  D. Bartoli, M. Giulietti, L. Quoos and G. Zini, Complete permutation polynomials from exceptional polynomials, J. Number Theory 176 (2017), 46--66.
\bibitem{BGZ}  D. Bartoli, M. Giulietti and G. Zini, On monomial complete permutation polynomials, Finite Fields Appl. 41 (2016), 132--158.
\bibitem{BZ} L. A. Bassalygo and V. A. Zinoviev, Permutation and complete permutation polynomials, Finite Fields Appl. 33 (2015), 198--211.
\bibitem{BW} A. Bors and Q. Wang, Cycle types of complete mappings of finite fields, J. Algebra 591 (2022), 577--610.
\bibitem{BW2}	A. Bors and Q. Wang,  Composition and parities of complete mappings of orthomorphism, J. Combin. Theory Ser. A 196 (2023), 105723.
\bibitem{CLLSL}S. Chen, R. Lampe, J. Lee, Y. Seurin and J. Steinberger, Minimizing the two-round Even–Mansour cipher, J. Cryptol. 31 (4) (2018), 1064--1119.
\bibitem{LE} L. E. Dickson, The analytic representation of substitutions on a power of a prime number of letters with a discussion of the linear group,  Ann. Math. 11 (1896), 65--120.
\bibitem{FLWW}   X. Feng,  D. Lin,  L. Wang and  Q. Wang,  Further results on complete permutation monomials over finite fields, Finite Fields Appl. 57 (2019), 47--59.
\bibitem{GP} C. Gravel and D. Panario, Feedback linearly extended discrete functions, J. Algebra  Appl. 22 (2023), 1--15.
\bibitem{GP2024} C. Gravel and D. Panario, Revisiting linearly extended discrete functions, J. Math. Cryptol. 18 (2024), 20240010.
\bibitem{GPT2027} C. Gravel, D. Panario and H. Teixeira, Cycle structure of discrete functions extended through linear transformations, Finite Fields Appl. 117 (2027), 102878.
\bibitem{GW}C. Guo and L. Wang, Revisiting key-alternating Feistel ciphers for shorter keys and multi-user security, in Advances in Cryptology
(Lecture Notes in Comput. Sci.), vol. 11272, T. Peyrin and S. D. Galbraith, Eds. Brisbane, QLD, Australia: Springer, Dec. 2018, pp. 213--243.
\bibitem{CC} C. Hermite, Sur les fonctions de sept lettres, C. R. Acad. Sci. Paris 57 (1863), 750--757.
\bibitem{IMGM}T. Iwata, K. Minematsu, J. Guo and S. Morioka, CLOC: Authenticated encryption for short input, in Fast Software Encryption (Lecture Notes in Comput. Sci.), vol. 8540, C. Cid and
C. Rechberger, Eds. London, U.K.: Springer, Mar. 2014, pp. 149--167.
\bibitem{LM} X. Lai and J. L. Massey, A proposal for a new block encryption standard, in Workshop on the Theory and Application of of Cryptographic Techniques,  Vol 473, Berlin, Heidelberg: Springer, 1990,  pp. 389--404.
\bibitem{L}K. Li, Constructions of complete permutations in multiplication, Des. Codes Cryptogr. 93 (6) (2025), 2205--2228.
\bibitem{LLLZ}L. Li, C. Li, C. Li and X. Zeng, New classes of complete permutation polynomials, Finite Fields Appl. 55 (2019), 177--201.
\bibitem{LWXZ}L. Li, Q. Wang, Y. Xu and X. Zeng,  Several classes of complete permutation polynomials with Niho exponents, Finite Fields Appl.  72 (2021), 101831.
\bibitem{LNH_1997} R. Lidl and H. Niederreiter, Finite Fields, 2nd edn., Encyclopedia Math. Appl., Vol. 20, Cambridge University Press, Cambridge, 1997.
\bibitem{LLG} Y. Luo, X. Lai and Z. Gong, Pseudorandomness Analysis of the (Extended) Lai–Massey Scheme, Inform. Process. Lett. 111 (2) (2010), 90--96.
\bibitem{M}H. B. Mann, The construction of orthogonal Latin squares, Ann. Math. Stat. 13 (4) (1942), 418--423.
\bibitem{MP} A. Muratović-Ribić and E. Pasalic, A note on complete polynomials over finite fields and their applications in cryptography, Finite Fields Appl. 25 (2014), 306--315.
\bibitem{NR} H. Niederreiter and K. H. Robinson, Complete mappings of finite fields, J. Austral. Math. Soc. Ser. A 33 (2) (1982), 197–212. 
\bibitem{SV} R. P. Singh and C. Kumar Vishwakarma, Further results on complete permutation polynomials, Adv. Math. Commun. 26 (2026), 106--124.
\bibitem{SLGQ}  B. Sun, K. Li, J. Guo and L. Qu, New constructions of complete permutations, IEEE Trans. Inform. Theory 67 (11) (2021), 7561--7567.
\bibitem{TZMZ} Z. Tu, X. Zeng, J. Mao and J. Zhou, Several classes of complete permutation polynomials over finite fields of even characteristic, Finite Fields Appl. 68 (2020), 101737.
\bibitem{TW}  A. Tuxanidy and Q. Wang, Compositional inverses and complete mappings over finite fields, Discrete Appl. Math. 217 (2017),  318--329.
\bibitem{V} S. Vaudenay, On the Lai-Massey scheme, in Proc. Int. Conf. Theory Appl. Cryptol. Inf. Secur. Berlin, Germany: Springer, 1999, pp. 8--19.
\bibitem{VS}  C. K. Vishwakarma and R. P. Singh, Some results on complete permutation polynomials and mutually orthogonal Latin squares, Finite Fields Appl. 93 (2024), 102320.
\bibitem{WL} B. Wu and D. Lin, On constructing complete permutation polynomials over finite fields of even characteristic, Discrete Appl. Math. 184 (2015), 213--222.  
\bibitem{WLHZ} G. Wu, N. Li, T. Helleseth and Y. Zhang, Some classes of monomial complete permutation polynomials over finite fields of characteristic two, Finite Fields Appl. 28 (2014), 148--165.  
\bibitem{WLHZ2} G. Wu, N. Li, T. Helleseth and Y. Zhang, Some classes of complete permutation polynomials over $\mathbb{F}_q$, Sci. China Math. 58 (10) (2015),  2081--2094. 
\bibitem{XC}  G. Xu  and X. Cao,  Complete permutation polynomials over finite fields of odd characteristic, Finite Fields Appl. 31 (2015), 228--240.
\bibitem{XLZH}X. Xu, C. Li, X. Zeng and T. Helleseth, Constructions of complete permutation polynomials, Des. Codes Cryptogr. 86 (12) (2018), 2869--2892.
\bibitem{ZWFS}W. Zhang, W. Wu, D. Feng and B. Su, Some new observations on the SMS4 block cipher in the Chinese WAPI standard, in Information Security Practice and Experience (Lecture Notes in Comput. Sci.), vol. 5451, F. Bao, H. Li, and G. Wang, Eds. Xi’an, China: Springer, 2009, pp. 324--335.







 \end{thebibliography}
\end{document}